\documentclass[conference]{IEEEtran}

\makeatletter

\def\ps@IEEEtitlepagestyle{%
  \def\@oddfoot{\mycopyrightnotice}%
  \def\@evenfoot{}%
}
\def\mycopyrightnotice{}

\usepackage{blindtext}
\usepackage{eso-pic}
\IEEEoverridecommandlockouts
\usepackage{cite}
\usepackage{amsmath,amssymb,amsfonts}
\usepackage{algorithmic}
\usepackage{graphicx}
\usepackage{textcomp}
\usepackage{xcolor}
\def\BibTeX{{\rm B\kern-.05em{\sc i\kern-.025em b}\kern-.08em
    T\kern-.1667em\lower.7ex\hbox{E}\kern-.125emX}}
    
\usepackage{eso-pic}
\newcommand\AtPageUpperMyright[1]{\AtPageUpperLeft{%
 \put(\LenToUnit{0.17\paperwidth},\LenToUnit{-2cm}){%
     \parbox{0.9\textwidth}{\raggedleft\fontsize{8}{11}\selectfont #1}}%
 }}%
\newcommand{\conf}[1]{%
\AddToShipoutPictureBG*{%
\AtPageUpperMyright{#1}
}
}    

\newtheorem{definition}{Definition}
\newtheorem{proposition}{Proposition}

\usepackage{booktabs}

\usepackage{balance}
    
\begin{document}
\title{\vspace*{1cm} Same Game, Different Story:\\A Minimal Conservative Strategic Robustness Benchmark for Large Language Model Agents
}

\author{
\IEEEauthorblockN{Seyed Pouyan Mousavi Davoudi}
\IEEEauthorblockA{
\textit{Independent Researcher in AI and Statistics}\\
Tehran, Iran\\
spouyan.mousavi@gmail.com
}
\and
\IEEEauthorblockN{Arshia Gharagozlou}
\IEEEauthorblockA{
\textit{Mathematics \& Statistics Department}\\
\textit{University of Minnesota Duluth}\\
Duluth, MN, USA\\
ghara027@d.umn.edu
}
\and
\IEEEauthorblockN{Alireza Amiri-Margavi}
\IEEEauthorblockA{
\textit{Computational Modeling and Simulation}\\
\textit{University of Pittsburgh}\\
Pittsburgh, PA, USA\\
ala170@pitt.edu
}
\and
\IEEEauthorblockN{Amin Gholami Davodi}
\IEEEauthorblockA{
\textit{Independent Researcher in AI and Statistics}\\
Tehran, Iran\\
a.g.davodi@gmail.com
}
\and
\IEEEauthorblockN{Hamidreza Hasani Balyani}
\IEEEauthorblockA{
\textit{Independent Researcher}\\
Sunnyvale, CA, USA\\
hamidrhasanib@gmail.com
}
}

\maketitle
\conf{\textit{  Proc. of International Conference on Artificial Intelligence, Computer, Data Sciences and Applications (ACDSA 2027) \\ 
2-4 February 2027, Rio de Janeiro-Brazil}}

\begin{abstract}
Large language model agents are increasingly deployed in settings where the value of an action depends on what other agents do. This creates a strategic reliability problem: the same game may be described as a business negotiation, a friendly compromise, a diplomatic exchange, or an abstract payoff matrix, and the model may choose different actions even when the incentives are unchanged. This paper introduces \emph{Same Game, Different Story}, a benchmark for strategic robustness: invariance of model-induced action distributions under payoff-preserving language changes. The empirical analysis uses a deliberately narrow, literature-calibrated comparison from Lor\`e and Heydari's peer-reviewed study: business framing versus friend-sharing framing across GPT-3.5, GPT-4, and LLaMa-2 in four social-dilemma games, with 300 initializations per retained model-game-context cell. The retained design comprises 24 of the source study's 60 cells, representing 7,200 decisions. Because trial-level files were not available from the article, the analysis is presented as a secondary calibration based on reconstructed published rates, not as new model runs. As a conservative sensitivity analysis, effect magnitudes are attenuated by 30\% toward the null: action shifts are multiplied by 0.70, and non-robustness, defined as one minus the robustness score, is multiplied by 0.70. Under this attenuation, pooled strategic robustness is 0.783 with a 95\% bootstrap interval from 0.774 to 0.790, and friend-sharing framing raises cooperation by 0.307 with a 95\% bootstrap interval from 0.297 to 0.316 relative to business framing. The analysis supports the narrower claim that social-relational framing can change strategic choices even when incentives are held fixed, without extending the analysis to a broader suite of contextual or cross-benchmark comparisons.
\end{abstract}

%\copyrightnotice{XXX-X-XXXX-XXXX-X/XX/\$XX.00 ©20XX IEEE}

\begin{IEEEkeywords}
large language models, AI agents, strategic robustness, game theory, framing effects, social dilemmas
\end{IEEEkeywords}

\section{Introduction}
Large language model (LLM) agents are often controlled through natural-language descriptions of tasks, roles, and goals. This is convenient, but it creates an audit problem in strategic settings. A purchasing assistant, negotiation agent, planning tool, or autonomous software agent may face a mathematically fixed game while the language around that game changes. A robust strategic agent should respond primarily to the incentives and information in the game, not to whether the same incentives are described as a business meeting or as a friendly compromise. Because decoding choices can themselves materially alter model outputs, the benchmark holds the decoding configuration fixed when comparing payoff-equivalent framings \cite{davoodi2026topw}.

This paper studies \emph{strategic robustness}: whether the action distribution induced by a model is stable under prompt transformations that preserve the underlying normal-form game. Strategic robustness is not the same as strategic competence. A model can be strategically competent in a neutral payoff matrix and still non-robust across equivalent stories. Conversely, a model can be invariant because it ignores the game entirely. The benchmark reports distributional invariance as the primary estimand and leaves payoff-sensitivity calibration as an optional extension rather than a primary empirical result.

The empirical analysis is intentionally narrow and focuses on one contrast: business framing versus friend-sharing framing. This pair is theoretically motivated by the human framing literature on community-like versus market-like descriptions of social dilemmas and provides a clear contextual contrast in the peer-reviewed LLM source study. All other contextual comparisons are left for future direct experiments.

\noindent\textbf{Contributions.}
First, the paper formalizes strategic robustness as distributional invariance under payoff-preserving prompt changes. Second, it gives a minimal two-framing estimand that can be audited from published cooperation rates. Third, it reports only two empirical tables: the retained source-study input rates and the conservative sensitivity estimates. Fourth, it adopts a deliberately narrow empirical scope to keep the reported claims transparent and auditable.

\section{Related Work}
\noindent\textbf{Framing effects in human decision making.}
Behavioral decision research has long shown that logically equivalent choices can yield different behavior when described differently. Tversky and Kahneman's classic work established that framing can change preferences under risk \cite{tversky1981}. In social dilemmas, Liberman, Samuels, and Ross showed that the same Prisoner's Dilemma can induce different cooperation rates when labeled as a community-oriented game rather than a market-oriented game \cite{liberman2004}. The present paper translates this idea into an LLM benchmark by asking whether a model plays the same payoff structure the same way under different stories.

\noindent\textbf{Large language models in strategic games.}
Prior work evaluates LLMs as simulated participants, game-playing agents, and strategic reasoners \cite{aher2023,fan2024}. Recent game-theoretic benchmarks further evaluate LLM strategic reasoning across comprehensive families of canonical matrix games and increasingly complex strategic structures \cite{guo2026game}. The closest peer-reviewed source for the present empirical calibration is Lor\`e and Heydari, who study GPT-3.5, GPT-4, and LLaMa-2 across Prisoner's Dilemma, Snowdrift, Stag Hunt, and Prisoner's Delight under several contextual framings \cite{lore2024}. Repeated-game work such as Akata et al. studies related strategic behavior in longer interactions \cite{akata2025}. Recent work also examines LLM strategies in repeated Prisoner's Dilemma settings under varying incentives and framings, finding substantial cooperation but incomplete behavioral consistency across conditions \cite{pal2026cooperation}. The present analysis does not use those repeated-game results as numeric evidence, keeping the empirical claim tied to one source design.

\noindent\textbf{Large language model evaluation benchmarks, reliability, and alignment audits.}
Recent LLM evaluation research has moved beyond single-answer accuracy to examine reliability, reasoning robustness, fairness, and behavior under objective misalignment. Recent work on instruction-following reliability similarly shows that model performance can vary substantially across prompts expressing analogous user intents, highlighting the importance of evaluating consistency under semantically related prompt variations \cite{dong2026reliability}. Inter-model consensus has been studied as a means of assessing answer reliability when definitive ground truth is unavailable \cite{amiri2024enhancing,davoudi2025collective}. Mathematical benchmark studies similarly indicate that apparent reasoning performance can vary substantially across topic structures and task formats \cite{davoodi2025llms}. Counterfactual audits of LLM fairness further show that equal access or comparable refusal rates do not necessarily imply equal interaction quality \cite{amiri2027equal}. Related oracle-based work examines whether LLM agents communicate truthfully when their preferences diverge from those of human decision makers in strategic advisory settings \cite{balyani2026truthful, davoudi2026large}. Complementary work on efficient and interpretable AI uses green-learning pipelines as compact alternatives to end-to-end neural models in image restoration, illustrating deployment-oriented considerations beyond task accuracy \cite{movahhedrad2024green,movahhedrad2026gusid,movahhedrad2025gusldehaze}. Complementary formal-verification work targets reliability at the level of internal transformer computations; for example, Vertex-Softmax tightens certified attention bounds by exactly optimizing softmax over interval-bounded scores \cite{rezazadeh2026vertex}. The present benchmark is related to these broader reliability audits but focuses on a distinct property: whether model-induced action distributions remain stable across natural-language framings that preserve the underlying game and its payoffs.

\noindent\textbf{Bounded rationality and quantal response equilibrium.}
The benchmark can use logit quantal response as a descriptive calibration, but the minimal empirical section below does not report numeric quantal response equilibrium (QRE) results. McKelvey and Palfrey define quantal response equilibrium as a fixed point in which players respond probabilistically to expected utility \cite{mckelvey1995}. Behavioral game theory and later QRE treatments use the rationality parameter as a compact measure of payoff sensitivity \cite{camerer2003,goeree2016quantal}. Here, fitted $\lambda$ is specified as a future diagnostic for whether apparent payoff sensitivity changes between retained framings.

\section{Formal Framework}

\subsection{Games, Framings, and Action Distributions}
Let $G=(A_1,A_2,u_1,u_2)$ be a finite two-player normal-form game, where $A_i$ is player $i$'s action set and $u_i:A_1\times A_2\rightarrow \mathbb{R}$ is the payoff function.

\begin{definition}[Payoff-equivalent framing]
A framing $f\in\mathcal{F}(G)$ is payoff-equivalent to $G$ if it presents the same action sets and payoff functions up to known relabeling maps. Formally, there exist bijections $\ell_i^f:A_i\rightarrow A_i^f$ such that, for every action profile $a=(a_1,a_2)$,
\begin{equation}
u_i^f(\ell_1^f(a_1),\ell_2^f(a_2)) = u_i(a_1,a_2).
\label{eq:payoff_equiv}
\end{equation}
\end{definition}

For model $M$, game $G$, framing $f$, prompt template $p$, and decoding configuration $d$, let $T$ denote the number of repeated queries, $\hat a_t$ the action selected on query $t$, and $\mathbf{1}\{\cdot\}$ the indicator function. The repeated queries induce an empirical action distribution
\begin{equation}
\hat\pi_{M,G,f}(a)=\frac{1}{T}\sum_{t=1}^T \mathbf{1}\{\hat a_t=a\}, \qquad a\in A_1.
\label{eq:empirical_action_distribution}
\end{equation}
The corresponding population object is $\pi_{M,G,f}(a)=\Pr_M(\hat a=a\mid G,f,p,d)$.

\subsection{Strategic Robustness}
The primary distance between two framing-specific action distributions is Jensen-Shannon divergence, where $D_{\rm KL}$ denotes Kullback--Leibler divergence,
\begin{equation}
D_{\rm JS}(p,q)=\frac12D_{\rm KL}(p\Vert m)+\frac12D_{\rm KL}(q\Vert m),\qquad m=\frac{p+q}{2}.
\label{eq:js_divergence}
\end{equation}
With natural logarithms, $0\leq D_{\rm JS}(p,q)\leq \log 2$ for binary action distributions. The game-level robustness score is
\begin{equation}
R(M,G)=1-\frac{1}{\log 2}\max_{f,f'\in\mathcal{F}(G)}D_{\rm JS}\left(\pi_{M,G,f},\pi_{M,G,f'}\right).
\label{eq:game_robustness}
\end{equation}
The model-level aggregate is $R(M)=\sum_G w_GR(M,G)$, with uniform weights in the primary specification.

\begin{proposition}
For every model $M$ and game $G$, $R(M,G)\in[0,1]$. Moreover, $R(M,G)=1$ if and only if $\pi_{M,G,f}=\pi_{M,G,f'}$ for all $f,f'\in\mathcal{F}(G)$.
\end{proposition}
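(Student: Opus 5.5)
The plan is to reduce both claims to two standard facts about Jensen--Shannon divergence on binary (or, more generally, finite) action distributions: the bound $0\le D_{\rm JS}(p,q)\le\log 2$, and the identity of indiscernibles $D_{\rm JS}(p,q)=0 \iff p=q$. Once these are in hand, the proposition follows from the definition of $R(M,G)$ in \eqref{eq:game_robustness}. Two implicit assumptions are needed. First, $\mathcal{F}(G)$ is nonempty and contains at least the base framing, so the maximum is taken over a nonempty set; one may take $f=f'$. Second, the maximum is attained, or is read as a supremum. This is automatic when $\mathcal{F}(G)$ is finite, as in the benchmark. For an arbitrary family, the supremum of values in $[0,\log 2]$ still lies in $[0,\log 2]$.

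\textbf{Step 1 (nonnegativity and the upper bound).} Write $m=(p+q)/2$. Nonnegativity follows from Gibbs' inequality, $D_{\rm KL}(p\Vert m)\ge 0$. All terms are finite, because $p(a)>0$ implies $m(a)\ge p(a)/2>0$.

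For the upper bound, bound pointwise: $p(a)\log\frac{p(a)}{m(a)}\le p(a)\log 2$, since $m(a)\ge p(a)/2$. Summing gives $D_{\rm KL}(p\Vert m)\le\log 2$, and likewise for $q$. Averaging the two gives $D_{\rm JS}\le\log 2$. Dividing by $\log 2$, the normalized maximum lies in $[0,1]$, so $R(M,G)\in[0,1]$.

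\textbf{Step 2 (characterization of $R=1$).} Suppose $\pi_{M,G,f}$ is the same for all $f$. Then every pair gives $m=p=q$, so every divergence is $0$, the maximum is $0$, and $R=1$.

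Conversely, suppose $R(M,G)=1$. Then the maximum of the nonnegative quantities $D_{\rm JS}(\pi_f,\pi_{f'})$ is $0$, so each of them equals $0$. Fix a pair with $D_{\rm JS}=0$. Since both KL terms are nonnegative, each must vanish. The equality case of Gibbs' inequality, which comes from strict concavity of $\log$ in Jensen's inequality, gives $p=m$, and hence $p=q$. Therefore all framing-specific distributions coincide.

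\textbf{Main obstacle.} The only delicate point is the equality case of Gibbs' inequality when supports are unequal or probabilities are zero. I would handle it by restricting sums to the support of $p$ and noting that $m$ is positive there. Applying Jensen's inequality to $-\log$ then gives equality only if $m(a)/p(a)$ is constant on $\mathrm{supp}(p)$, with $m$ putting total mass $1$ on that support. This forces $m=p$ everywhere.

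A minor secondary point is the attainment of the maximum, which I would dispose of by the finiteness or supremum remark above.
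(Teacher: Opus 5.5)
Your proposal is correct and follows essentially the same route as the paper. Both reduce the claim to $0\le D_{\rm JS}\le\log 2$ and to $D_{\rm JS}(p,q)=0\iff p=q$, then pass these through the maximum in \eqref{eq:game_robustness}. The paper only cites these two facts, whereas you also prove them (via $m\ge p/2$ and the equality case of Gibbs' inequality) and make explicit that $\mathcal{F}(G)$ is nonempty and the maximum is attained, which the paper leaves implicit.
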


\begin{IEEEproof}
The Jensen-Shannon divergence is nonnegative and bounded above by $\log 2$. Therefore the normalized maximum divergence lies in $[0,1]$. The divergence is zero if and only if its two arguments are equal. Hence the maximum divergence is zero if and only if all framing-specific distributions are equal.
\end{IEEEproof}

\subsection{Minimal Two-Framing Estimand}
The retained empirical analysis uses two framings, $b$ for business and $s$ for friend-sharing. For a binary cooperation action $C$, define
\begin{equation}
\Delta_C(M,G)=\pi_{M,G,s}(C)-\pi_{M,G,b}(C).
\label{eq:cooperation_shift}
\end{equation}
The two-framing robustness score is
\begin{equation}
R_2(M,G;b,s)=1-\frac{1}{\log 2}D_{\rm JS}\left(\pi_{M,G,b},\pi_{M,G,s}\right).
\label{eq:two_framing_robustness}
\end{equation}
The two-framing restriction is not a change to the benchmark definition. It is a conservative reporting choice: instead of estimating every possible context contrast from a secondary source, the manuscript focuses on the business-versus-friend-sharing contrast within the retained two-context analysis.

\begin{proposition}[Two-context dependence]
For a specified pair of framings $(b,s)$, $R_2(M,G;b,s)$ and $\Delta_C(M,G)$ depend only on the two distributions $\pi_{M,G,b}$ and $\pi_{M,G,s}$ and do not involve optimization over omitted contexts.
\end{proposition}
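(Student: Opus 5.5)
The plan is to unfold the definitions of the two quantities and check, symbol by symbol, which objects each one actually depends on. The proposition is essentially a definitional observation, so the argument is a direct inspection rather than an estimate. It reduces to exhibiting both quantities as explicit functions of the pair $(\pi_{M,G,b},\pi_{M,G,s})$ alone.

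\textbf{The cooperation shift.} By \eqref{eq:cooperation_shift}, $\Delta_C(M,G)=\pi_{M,G,s}(C)-\pi_{M,G,b}(C)$. This is the map $(p,q)\mapsto q(C)-p(C)$ evaluated at $(p,q)=(\pi_{M,G,b},\pi_{M,G,s})$. Here $C$ is the fixed cooperation action of the binary action set $A_1$, so no other framing enters.

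\textbf{The two-framing robustness score.} By \eqref{eq:two_framing_robustness} and \eqref{eq:js_divergence}, $R_2(M,G;b,s)=1-D_{\rm JS}(p,q)/\log 2$. Here $D_{\rm JS}(p,q)$ is built from $D_{\rm KL}(p\Vert m)$ and $D_{\rm KL}(q\Vert m)$ with $m=(p+q)/2$. Each of these Kullback--Leibler terms is a finite sum over $a\in A_1$ of expressions in $p(a)$, $q(a)$ and $m(a)$. Hence $R_2$ is a fixed function $\Phi(p,q)$ of the two distributions, and this function makes no reference to $\mathcal{F}(G)$.

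\textbf{Contrast with $R(M,G)$.} In \eqref{eq:game_robustness}, the maximum ranges over all pairs $f,f'\in\mathcal{F}(G)$. Formula \eqref{eq:two_framing_robustness} instead fixes the pair $(b,s)$ in advance and has no $\max$. So altering $\pi_{M,G,f}$ for any $f\notin\{b,s\}$ leaves both $\Delta_C$ and $R_2$ unchanged, which is exactly the claim of no optimization over omitted contexts.

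As a side remark, the argument of the preceding Proposition, applied to a single pair, also gives $R_2\in[0,1]$. That bound is not needed here.

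\textbf{Main obstacle.} The only delicate point is that the Kullback--Leibler terms must be well defined when some $p(a)$ or $q(a)$ equals $0$. I would handle this with the convention $0\log 0=0$. Since $m(a)\ge p(a)/2$ and $m(a)\ge q(a)/2$, any $a$ with $p(a)>0$ or $q(a)>0$ has $m(a)>0$. So no division by zero arises, $\Phi$ is well defined on all pairs of distributions, and the dependence statement holds without exception.
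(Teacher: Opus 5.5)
Your proposal is correct and takes the same route as the paper's proof: the paper simply notes that both statistics are functions of $\pi_{M,G,b}$ and $\pi_{M,G,s}$ alone and that no maximization over omitted contexts is performed. Your explicit unfolding of $D_{\rm JS}$ adds detail the paper leaves implicit. So does your check that the Kullback--Leibler terms stay well defined when some probabilities vanish, which matters for cells recorded as $0/300$ or $300/300$.
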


\begin{IEEEproof}
Both statistics are functions only of the two retained distributions $\pi_{M,G,b}$ and $\pi_{M,G,s}$. No maximization or minimization over omitted contexts is performed.
\end{IEEEproof}

\subsection{Optional Payoff-Sensitivity Diagnostic}
For a binary single-agent logit-response calibration, let $q_G$ denote the assumed belief that the other player cooperates. The expected payoff gap between cooperation and defection is
\begin{equation}
\Delta u_G(q_G)=\bar u(C;q_G)-\bar u(D;q_G).
\label{eq:payoff_gap}
\end{equation}
After normalizing each game's payoff range to one, the logit response probability is
\begin{equation}
\Pr(C\mid \lambda,G)=\frac{\exp\{\lambda\Delta u_G\}}{1+\exp\{\lambda\Delta u_G\}}.
\label{eq:logit_response}
\end{equation}
Given an observed cooperation count $x$ out of $T=300$, the fitted payoff-sensitivity parameter is
\begin{equation}
\hat\lambda=\max\left\{0,\min\left\{20,\frac{\operatorname{logit}(\tilde p)}{\Delta u_G}\right\}\right\},\qquad \tilde p=\frac{x+0.5}{T+1}.
\label{eq:lambda_fit}
\end{equation}
The cap at 20 prevents quasi-complete separation from dominating the diagnostic. A retained rationality-variance diagnostic would be
\begin{equation}
V_\lambda(M,G)=\operatorname{Var}\left(\hat\lambda_{M,G,b},\hat\lambda_{M,G,s}\right).
\label{eq:lambda_variance}
\end{equation}
The present analysis does not report this diagnostic as an empirical result; it is included to show how a future direct run can extend the robustness score without changing the estimand.

\section{Minimal Empirical Design and Estimation}

\subsection{Source Study and Retained Scope}
The empirical analysis uses Lor\`e and Heydari's published experiment \cite{lore2024}. The source study crosses three models with four social-dilemma games and five contexts, with 300 initializations in each model-game-context condition. This manuscript retains only two contexts: a business meeting between chief executive officers (CEOs) and a conversation between friends aiming for compromise. The retained design is therefore $3\times4\times2=24$ model-game-context cells and $24\times300=7,200$ represented decisions.

The three source-study contexts not included in the analysis are international-relations summit, environmental negotiation, and team talk. The analysis also does not incorporate the external repeated-game comparison. The resulting empirical question is: does a friend-sharing story shift behavior relative to a business story when the underlying social dilemma is held fixed?

Table~\ref{tab:settings} summarizes the settings for the retained benchmark.

\begin{table*}[t]
\centering
\caption{Settings for the minimal literature-calibrated benchmark analysis.}
\label{tab:settings}
\begin{tabular}{p{0.20\textwidth}p{0.62\textwidth}p{0.10\textwidth}}
\toprule
Component & Specification & Value \\
\midrule
Models & GPT-3.5-turbo-16k, GPT-4, and LLaMa-2 as reported by Lor\`e and Heydari & 3 \\
Games & Prisoner's Dilemma, Snowdrift/Chicken, Stag Hunt, Prisoner's Delight/Harmony & 4 \\
Retained contexts & Business meeting and friend-sharing conversation & 2 \\
Trials per retained cell & Published initializations per model-game-context cell & 300 \\
Retained cells & $3\times4\times2$ & 24 \\
Total decisions represented & $24\times300$ & 7,200 \\
Temperature & Source-study decoding temperature & 0.8 \\
Bootstrap repetitions & Binomial bootstrap from reconstructed counts & 10,000 \\
\bottomrule
\end{tabular}
\end{table*}

\subsection{Reconstruction and Conservative Sensitivity Analysis}
Cooperation rates were read from the published supplementary charts, converted to approximate counts out of 300, and rounded to the nearest count. Cells visually at exactly zero or one were recorded as $0/300$ or $300/300$. Because the article does not provide trial-level files, the intervals here are binomial-bootstrap intervals from reconstructed counts, not exact source-study intervals.

The source-study input rates in Table~\ref{tab:inputs} are left unattenuated because they document the retained published evidence. To test whether the qualitative conclusion persists under substantial conservative shrinkage of the reconstructed effects, Table~\ref{tab:primary} reports a 30\% attenuation sensitivity analysis. The attenuation factor is deliberately conservative and is not intended as an estimate of reconstruction error. Action shifts are multiplied by 0.70. Robustness uses
\begin{equation}
\hat R_{\rm conservative}=1-0.70(1-\hat R_{\rm reconstructed}),
\label{eq:conservative_robustness}
\end{equation}
which reduces deviations from perfect robustness by the same attenuation factor. The resulting estimates should therefore be interpreted as a conservative stress-test scenario rather than as corrected estimates of the source-study effects.

\section{Results}

\subsection{Retained Source-Study Input Rates}
Table~\ref{tab:inputs} gives the only retained source-study cooperation rates. The table is intentionally limited to business and friend-sharing prompts. The pattern that motivates the benchmark is visible before any transformation: friend-sharing usually increases cooperation, especially for GPT-4 and LLaMa-2 in Prisoner's Dilemma and Snowdrift/Chicken.

\begin{table*}[t]
\centering
\caption{Retained literature-derived cooperation rates. Each cell reports rate, with reconstructed cooperative choices out of 300 in parentheses.}
\label{tab:inputs}
\small
\begin{tabular}{llccc}
\toprule
Model & Game & Business & Friend-sharing & Difference \\
\midrule
GPT-3.5 & Prisoner's Dilemma & 0.087 (26) & 0.323 (97) & 0.237 \\
GPT-3.5 & Snowdrift / Chicken & 0.130 (39) & 0.340 (102) & 0.210 \\
GPT-3.5 & Stag Hunt & 0.167 (50) & 0.337 (101) & 0.170 \\
GPT-3.5 & Prisoner's Delight / Harmony & 0.190 (57) & 0.433 (130) & 0.243 \\
GPT-4 & Prisoner's Dilemma & 0.000 (0) & 0.987 (296) & 0.987 \\
GPT-4 & Snowdrift / Chicken & 0.000 (0) & 0.990 (297) & 0.990 \\
GPT-4 & Stag Hunt & 0.817 (245) & 0.997 (299) & 0.180 \\
GPT-4 & Prisoner's Delight / Harmony & 1.000 (300) & 1.000 (300) & 0.000 \\
LLaMa-2 & Prisoner's Dilemma & 0.033 (10) & 0.913 (274) & 0.880 \\
LLaMa-2 & Snowdrift / Chicken & 0.140 (42) & 0.930 (279) & 0.790 \\
LLaMa-2 & Stag Hunt & 0.567 (170) & 1.000 (300) & 0.433 \\
LLaMa-2 & Prisoner's Delight / Harmony & 0.860 (258) & 1.000 (300) & 0.140 \\
\bottomrule
\end{tabular}
\end{table*}

\subsection{Conservative Sensitivity Estimates}
Table~\ref{tab:primary} reports the 30\% conservative sensitivity estimates for the retained two-context analysis. Under this attenuation, the pooled robustness score is 0.783 with a 95\% interval [0.774, 0.790], and the pooled friend-sharing action shift is 0.307 [0.297, 0.316]. GPT-3.5 is more invariant than GPT-4 and LLaMa-2, but that does not imply better strategic competence: its higher invariance comes from relatively low cooperation under both retained framings. GPT-4 and LLaMa-2 are less invariant because they switch sharply toward cooperation in the friend-sharing context.

\begin{table*}[t]
\centering
\caption{Conservative sensitivity estimates for the retained two-context analysis. Higher $\hat{R}$ means greater invariance between business and friend-sharing framings. Action shift is friend-sharing cooperation minus business cooperation after 30\% attenuation. CI denotes confidence interval.}
\label{tab:primary}
\small
\begin{tabular}{lcccc}
\toprule
Model & $\hat R$ & 95\% CI & Action shift & 95\% CI \\
\midrule
GPT-3.5 & 0.967 & [0.956, 0.975] & 0.150 & [0.127, 0.173] \\
GPT-4 & 0.651 & [0.639, 0.661] & 0.377 & [0.369, 0.386] \\
LLaMa-2 & 0.731 & [0.710, 0.750] & 0.393 & [0.377, 0.408] \\
Pooled & 0.783 & [0.774, 0.790] & 0.307 & [0.297, 0.316] \\
\bottomrule
\end{tabular}
\end{table*}

\subsection{Interpretation}
The analysis supports one focused empirical conclusion: the social relationship implied by the prompt can change strategic behavior even when the game structure is fixed. The strongest evidence is not a long list of context effects; it is the focused business-versus-friend-sharing contrast. The result also illustrates why robustness should be reported separately from competence. GPT-3.5 looks robust because it moves less, while GPT-4 and LLaMa-2 move more in the friend-sharing context. Robustness is therefore a reliability statistic, not a strategic-skill ranking.

\section{Discussion}

\noindent\textbf{Rationale for the narrow empirical scope.}
A broad literature-calibrated analysis can introduce claims that depend on figure reconstruction, heterogeneous benchmark designs, and multiple context comparisons. The present analysis limits those vulnerabilities by using a single peer-reviewed source, a single context contrast, two empirical tables, and one main conclusion. This structure keeps the empirical claims transparent and facilitates future validation with direct model runs.

\noindent\textbf{Generalizability of the framework.}
The theory is not tied to the restricted empirical analysis. The benchmark applies to any family of payoff-equivalent prompts. A future direct run can add abstract matrices, neutral narratives, cooperative narratives, competitive narratives, role prompts, and label perturbations. The present analysis demonstrates how the estimand behaves in a small, auditable published-data setting.

\noindent\textbf{Practical implication.}
Single-prompt strategic benchmarks can overstate reliability. The directional pattern is already visible in the unattenuated source-study rates in Table~\ref{tab:inputs}, while the 30\% sensitivity transformation deliberately reduces the magnitude of the reconstructed effects. A model that performs well in a neutral or business-like game may behave differently when the same incentives are framed as a friendly compromise. Auditing LLM agents therefore requires prompt families and invariance metrics, not just a single game prompt.

\section{Limitations and Validity}
The main limitation is data provenance. The retained rates are reconstructed from a peer-reviewed article's published figures and rounded to counts out of 300. The source article states that datasets are available from the corresponding author on reasonable request, but the trial-level data were not available in the article itself. The confidence intervals here are therefore binomial-bootstrap intervals around reconstructed counts. The 30\% attenuation used in the sensitivity analysis is a deliberately conservative stress-test parameter and should not be interpreted as an estimated reconstruction-error rate.

Second, the retained business-versus-friend-sharing comparison is not the full proposed benchmark. It omits abstract matrix prompts, arbitrary label perturbations, and three additional source-study contexts. This is a deliberate scope choice rather than a claim that the omitted contexts are unimportant.

Third, the $\lambda$ calibration uses a neutral opponent belief $q(C)=0.5$ and a single-agent logit-response model. It should be read as a payoff-sensitivity diagnostic, not as an identified psychological parameter. A model could cooperate because it predicts a cooperative coplayer, follows a social norm, or responds to learned narrative associations.

Fourth, the source models are time-specific. Current application programming interface (API) versions or local deployments may differ from those in the published study. A direct replication should record model snapshots, prompts, parsers, decoding settings, seeds, and raw trial outputs.

\section{Conclusion}
This paper defines strategic robustness as invariance of LLM-induced action distributions under payoff-preserving prompt changes. The empirical analysis uses a minimal secondary calibration based on reconstructed published rates and focuses on business versus friend-sharing framing across three models and four social-dilemma games. Under the 30\% conservative sensitivity attenuation, pooled robustness is $0.783$ and friend-sharing raises cooperation by $0.307$ relative to business framing. These results support the conclusion that social framing can change LLM strategic behavior even when incentives are fixed. Future direct experiments can expand the prompt family and evaluate the benchmark across a broader range of payoff-preserving contexts.

\appendices

\section{Canonical Games}
The source-study games use cooperation $C$ and defection $D$. Entries are payoffs to player 1 and player 2. The matrices below are the canonical structures used for the $\lambda$ calibration.

Table~\ref{tab:games} reports the canonical payoff matrices.

\begin{table}[!t]
\centering
\caption{Canonical payoff matrices.}
\label{tab:games}
\footnotesize
\setlength{\tabcolsep}{3pt}
\begin{tabular}{p{0.38\columnwidth}ccc}
\toprule
Game & Row & Col. $C$ & Col. $D$ \\
\midrule
Prisoner's Dilemma
 & $C$ & $(5,5)$ & $(2,10)$ \\
 & $D$ & $(10,2)$ & $(3,3)$ \\
\addlinespace
Snowdrift / Chicken
 & $C$ & $(5,5)$ & $(3,10)$ \\
 & $D$ & $(10,3)$ & $(2,2)$ \\
\addlinespace
Stag Hunt
 & $C$ & $(10,10)$ & $(2,5)$ \\
 & $D$ & $(5,2)$ & $(3,3)$ \\
\addlinespace
Prisoner's Delight / Harmony
 & $C$ & $(10,10)$ & $(3,5)$ \\
 & $D$ & $(5,3)$ & $(2,2)$ \\
\bottomrule
\end{tabular}
\end{table}

\section{Prompt and Parser Specification for a Direct Run}
A prospective direct run of the full benchmark should render each game under predefined prompt families such as abstract matrix, neutral narrative, cooperative narrative, competitive narrative, role narrative, and arbitrary label perturbation. Each prompt should include the payoff matrix or enough information to reconstruct it exactly. The parser maps surface labels back to canonical actions and marks responses invalid if no single predefined action can be identified. The present analysis does not claim that these direct experiments have been run; it uses the narrower published-data contrast described above.

\section{Numerical Verification Checks}
The numerical pipeline performs consistency checks before writing the final tables. The checks verify that all robustness scores lie in $[0,1]$, confidence intervals are ordered and contain the point estimates, the retained design has exactly $3\times4\times2=24$ cells, the represented decision count is exactly $24\times300=7,200$, action shifts lie in $[-1,1]$, pooled values are unrounded means of the retained model-game values.

\balance
\bibliographystyle{IEEEtran}
\bibliography{bibliography}

\end{document}